\documentclass[]{article}

\usepackage{authblk}

\usepackage{amssymb,amsthm}
\usepackage{graphicx}
\usepackage[PostScript=dvips]{diagrams}
\usepackage{verbatim}

\usepackage{algorithm}
\usepackage{algpseudocode}

\newtheorem{definition}{Definition}
\newtheorem{remark}{Remark}
\newtheorem{proposition}{Proposition}
\newtheorem{example}{Example}

\usepackage{color}

\newcommand{\ubar}[1]{\text{\b{$#1$}}}

\usepackage{mathtools}
\newlength{\arrow}
\settowidth{\arrow}{\scriptsize$1000$}




\begin{document}

\title{Scheduling of Event-Triggered Networked Control Systems using Timed Game Automata}


\author[1]{Dieky Adzkiya}
\author[2]{Manuel Mazo, Jr.}
\affil[1]{Department of Mathematics, Institut Teknologi Sepuluh Nopember, Indonesia. Part of this work was done in the Delft Center for Systems and Control, TU Delft, The Netherlands. (e-mail: dieky@matematika.its.ac.id)}
\affil[2]{Delft Center for Systems and Control, TU Delft - Delft University of Technology, The Netherlands. (e-mail: m.mazo@tudelft.nl).}
\date{}                     
\setcounter{Maxaffil}{0}


\maketitle

\begin{abstract}
We discuss the scheduling of a set of networked control systems implemented over a shared communication network.
Each control loop is described by a linear-time-invariant (LTI) system with an event-triggered implementation.
We assume the network can be used by at most one control loop at any time instant and after each controller update, a pre-defined channel occupancy time elapses before the network is available.
In our framework we offer
the scheduler two options to avoid conflicts:
using the event-triggering mechanism, where the scheduler can choose the triggering coefficient; or
forcing controller updates at an earlier pre-defined time.
Our objective is avoiding communication conflict while guaranteeing stability of all control loops.
We
formulate the original scheduling problem as a control synthesis problem over a network of timed game automata (NTGA) with a safety objective.
The NTGA is obtained by taking the parallel composition of the timed game automata (TGA) associated with the network and with all control loops.
The construction of TGA associated with control loops leverages recent results on the abstraction of timing models of event-triggered LTI systems.
In our problem, the safety objective is to avoid that update requests from a control loop happen while the network is in use by another task.
We showcase the results in some examples.
\end{abstract}


\section{Introduction}

Networked control systems (NCSs) are spatially distributed systems in which the communication between sensors, actuators and controllers occurs through a shared band-limited digital communication network, as shown in Fig.\ \ref{fig:ncs}.
Such structures bring many advantages, for instance reduced wiring and maintenance costs as well as an increased flexibility and reconfigurability.
NCSs occur in numerous applications, including power systems \cite{Liu2009}, aircrafts and automobiles \cite{Thompson2004}, and process control \cite{Lehmann2011}.

\begin{figure*}[!t]
\centering
\includegraphics[width=\textwidth]{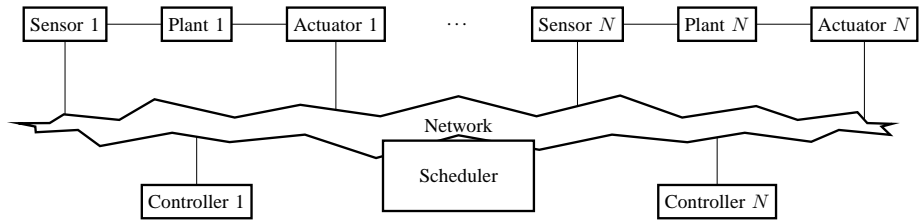}
\caption{Topology of a set of $N$ networked control systems.}\label{fig:ncs}
\end{figure*}

Notice that NCSs are implemented over shared communication resources, most often over digital channels.
The impact of such communication infrastructures on control systems has been studied in the last decade \cite{special_proceedings, reviewCUDR, Hespanha07asurvey, HandbookNECS}.
In particular the applicability of wireless communications in
NCSs has been discussed in \cite{RUNES, Rabi08, MazoTACWSAN} among others.
The delays introduced by these shared resources on the feedback loops are critical to guarantee stability and performance.
Furthermore, when several control loops are implemented over a shared communication channel, bandwidth becomes a scarce resource, the usage of which needs to be minimized by each controller.


For these reasons,
the traditional time-triggered controller implementations, i.e.\ based on periodic sampling, are not suitable anymore.
With the objective of minimizing the bandwidth usage,
event-based approaches resulting in aperiodic controller updates have been proposed in \cite{Astrom, Tabuada, Velasco, Heemels08, AntaTAC, MazoAnta}.
These aperiodic paradigms introduce a new challenge in the design: the scheduling of transmissions.

A communication network has finitely many channels. The number of channels represents the maximum number of messages that can be sent simultaneously over the network. If the number of channels equals the number of control loops, we do not need any scheduler because every control loop has its own communication channel. However in practice, the number of channels is smaller than the number of control loops. Thus we need a scheduler to decide which control loop has access to the network at any time instant while guaranteeing stability of all control loops. Additionally the scheduler can optimize a certain combination of control performance and bandwidth usage.
In the context of periodic traffic sources, or aperiodic with known deadlines a-priori, there are well-studied scheduling techniques that are capable of guaranteeing certain delay bounds for the traffic \cite{sprunt1989aperiodic, buttazzo2011hard, WalshYe2001}. In the event-based context, i.e.\ sporadic traffic with unknown deadlines, the problem has been less studied and becomes more challenging \cite{Cervin08,AlAreqi2013,AlAreqi2014,Reimann2013}.
In \cite{AlAreqi2013,AlAreqi2014,Reimann2013}, the authors propose a joint design (codesign) of a control law and a scheduling law for several types of NCSs.
Although the co-design strategy can improve the control performance significantly,
if a new control loop is introduced to the NCSs,
the whole co-design procedure has to be executed again, which can be extremely time consuming.
In order to mitigate this issue, we propose an approach that separates the design of controllers and schedulers.

In this paper, we design a scheduler for a set of NCSs over a shared communication network (cf.\ Fig.\ \ref{fig:ncs}).
Our objective is avoiding communication conflict while guaranteeing stability of all NCSs.
Each NCS is a linear-time-invariant (LTI) system. The controllers are implemented in an event-triggered fashion where the delays between reading the state and updating the actuators are ignored \cite{Tabuada}. With respect to the shared communication network, we assume it has a single communication channel. Furthermore after each controller update, a pre-defined channel occupancy time elapses before the network is available. We consider schedulers that after each transmission of measurements, decide the policy for the next update. These policies can be to either let the next update be decided based on a triggering mechanism (to be chosen among a set of them guaranteeing different performances) or forced to be at an earlier pre-defined time.
If we do not allow the scheduler to force earlier controller updates, the bandwidth usage is decreased at the cost of worse control performance (or slower convergence).
On the other hand if we allow the scheduler to force earlier controller updates, we obtain a better control performance (or faster convergence) at the cost of increased bandwidth usage.
In this case, nothing prevents the scheduler from always using earlier updates and never use the event-triggering mechanism.  This may result in an undesired over-use of the communication channel, and could be prevented by introducing costs to the model, which would result in priced timed game automata (PTGA). Unfortunately to the best of our knowledge, no results are available in the literature allowing the synthesis of strategies over PTGA with safety objectives.
Only the synthesis of strategies over PTGA with a reachability objective are available \cite{Bouyer2005,Bouyer2005a}.
Thus we propose an alternative approach to prevent the undesired schedule by limiting the consecutive earlier updates.

The scheduling problem can be formulated as a timed safety game:
given a model and a set of bad states,
we seek to construct a strategy such that the model supervised by the strategy constantly avoids the bad states.
In our problem, the safety problem at hand is to avoid that update requests from a control loop happen while the network is in use by another task.
We focus our attention to the design of a scheduler by leveraging techniques
originally developed for network of timed game automata (NTGA) \cite{Maler1995}.
An NTGA is the parallel composition of timed game automata (TGA), which are
timed safety automata (TSA) in with the set of actions is partitioned into controllable and uncontrollable actions.
We choose NTGA modeling framework because of the following two reasons.
First of all, it allows us to extend the methods in \cite{ArmanTCoN}. The authors of \cite{ArmanTCoN} discuss formal abstraction of the timing behavior of LTI systems with event-triggered implementation as TSA.
The second reason is that
the solution of timed safety game over NTGA can be computed by using backward algorithms \cite{DeAlfaro2001,Maler1995} or on-the-fly algorithms \cite{Cassez2005}.
Moreover the algorithms have been implemented in some freely available software tools \cite{Asarin1998,Cassez2005}.
The procedure to generate the scheduler (or the strategy) is as follows. First, we construct an NTGA from a set of NCSs.
The NTGA is obtained by taking the parallel composition of the TGA associated with the network and with all control loops.
Then we characterize the bad states, i.e.\ states corresponding to a communication conflict. Finally the scheduler is defined as the solution of timed safety game over the NTGA.

%
Timed automata (TA) \cite{Alur1994} are a general modeling framework for a wide range of real-time systems, such as in web services \cite{Ravn2011}, audio/video protocols \cite{Havelund1997}, bounded retransmission protocols \cite{DArgenio1997}, collision avoidance protocols \cite{Aceto1998,Jensen1996} and commercial field bus protocols \cite{David2000}.
Timed safety automata (TSA) \cite{Henzinger1994} are a simplified version of TA.
In order to enforce progress properties, TSA use local invariant conditions whereas TA use B{\"u}chi or Muller accepting conditions.
%
The scheduling problem over TA and its variants has been already studied in the literature e.g.\ applied to the scheduling of a steel plant, a job shop and a task graph in \cite{Fehnker1999}, \cite{Abdeddaim2001}, and \cite{Abdeddaim2003}, respectively.
Furthermore the optimal scheduling of a production w.r.t.\ a predefined cost for a finite time horizon has been investigated in \cite{Behrmann2005a,Behrmann2005b}.
In this case, the models are TA with weights (or costs) on both locations and edges, so called priced timed automata in \cite{Behrmann2001} and weighted timed automata in \cite{Alur2001}. Finally the optimal scheduling for infinite time horizon is discussed in \cite{Bouyer2008}.

The rest of this manuscript is structured as follows.
Section \ref{sec:modelprel} recalls some modeling frameworks and preliminaries.
Section \ref{sec:scheduling} proposes a procedure to synthesize a conflict-free scheduler.
Two experimental results are discussed in Section \ref{sec:case}.
The first experiment allows the scheduler to force earlier controller updates where the number of consecutive earlier updates is limited to 4.
The second experiment does not allow the scheduler to force earlier controller updates. In this experiment, the scheduler can choose the triggering coefficient among three choices.
Finally the conclusions and possible future research directions are summarized in Section \ref{sec:concl}.

\section{Models and Preliminaries}
\label{sec:modelprel}

\subsection{Timed Automata}
\label{sec:modelprel-ta}


A timed automaton (TA) \cite{Alur1994} is a finite automaton (namely, a directed graph containing finitely many nodes and finitely many labeled edges)
extended with real-valued variables, which is usually employed to model real-time systems.
The real-valued variables model logical clocks, that are initialized to zero when the system is started and thereafter increase synchronously at the same rate.
We shall refer to these variables as simply ``clocks".
Clock constraints are used to restrict the behavior of the automaton.
An edge transition can be taken when the edge is enabled. Edges are enabled if the values of the clocks satisfy the guard conditions associated with the edge.
Additionally, some clocks may be reset to zero when an edge is taken.
Originally, B{\"u}chi and Muller accepting conditions are used to enforce progress properties \cite{Alur1994}.
A simplified version called timed safety automata \cite{Henzinger1994} uses local invariant conditions to specify progress properties.
In this work, we focus on timed safety automata and refer them as timed automata for simplicity.

We define $C$ as the set of finitely many clocks,
$\mathsf{Act}$ as the set of finitely many actions and
$\mathbb N_0$ as the set of natural numbers including zero $\{0,1,\dots\}$.
A clock constraint is a conjunctive formula of atomic constraints of the form $x \bowtie n$ or $x - y \bowtie n$ for $x,y \in  C$, $\bowtie \in \{\leq,<,=,>,\geq\}$ and $n \in \mathbb N_0$.
Clock constraints will be used as guards on edges and location invariants.
We use $\mathcal B(C)$ to denote the set of clock constraints.
\begin{definition}[Timed Automaton]\label{def:ta}
A timed automaton $\mathsf{TA}$ is a sextuple\\$(L,\ell_0,\mathsf{Act},C,E,\mathsf{Inv})$ where
\begin{itemize}
\item $L$ is a set of finitely many locations (or nodes);
\item $\ell_0 \in L$ is an initial location;
\item $\mathsf{Act}$ is a set of finitely many actions;
\item $C$ is a set of finitely many real-valued clocks;
\item $E \subseteq L \times \mathcal B(C) \times \mathsf{Act} \times 2^{C} \times L$ is a set of edges;
\item $\mathsf{Inv} : L \to \mathcal B(C)$ assigns invariants to locations.\footnote{Recall that $2^C$ denotes the power set of $C$.}
\end{itemize}
Location invariants are restricted to constraints that are downwards closed, in the form: $c \leq n$ or $c < n$ where $c$ is a clock and $n \in \mathbb N_0$.
\end{definition}
Sometimes we write $\ell \rTo^{g,a,\mathbf r} \ell'$ 
when $(\ell,g,a,\mathbf r,\ell') \in E$.
Furthermore we write $\ell \rTo \ell'$ 
to denote the existence of an edge from $\ell$ to $\ell'$ with arbitrary labels.

The semantics of a TA are defined as a transition system where a state consists of the current location and the current value of clocks.
There are two types of transitions between states depending on whether the automaton: delays for some time (a delay transition), or takes an enabled edge (a discrete transition).

To keep track of clock values, we use functions known as clock assignments
$u:C \to \mathbb R_{\geq 0}$ and we employ $u \vDash g$ ($u$ satisfies $g$) to denote that the clock values of $u$ satisfy the guard $g$.
For $d \in \mathbb R_{\geq 0}$, let $u + d$ denote the clock assignment that maps all $c \in C$ to $u(c) + d$.
For a set of clocks $\mathbf c \subseteq C$, let $u[\mathbf c]$ denote the clock assignment that maps all clocks in $\mathbf c$ to 0 and agrees with $u$ for the rest of clocks in $C \setminus \mathbf c$.
\begin{definition}[Operational Semantics]
The semantics of a timed automaton is a transition system (also known as a timed transition system) in which states are pairs of location $\ell$ and clock assignment $u$, and
transitions are defined by the rules:
\begin{itemize}
\item Delay transition: $(\ell,u) \rTo^{d}_{\mathit{TS}} (\ell, u+d)$ if $u \vDash \mathsf{Inv}(\ell)$ and $(u+d) \vDash \mathsf{Inv}(\ell)$ for a non-negative real number $d \in \mathbb{R}_{\geq 0}$;
\item Discrete transition: $(\ell,u) \rTo^{a}_{\mathit{TS}} (\ell',u')$ if $\ell \rTo^{g,a,\mathbf r} \ell'$, $u \vDash g$, $u' = u[\mathbf r]$ and $u' \vDash \mathsf{Inv}(\ell')$.
\end{itemize}
A run of a timed automaton is a sequence of alternating delay and discrete transitions in the transition system.
\end{definition}
We denote by $\mathsf{Runs}(\mathsf{TA})$ the set of runs of timed automaton $\mathsf{TA}$ starting from the initial state $(\ell_0,u_0)$ where $u_0$ is a clock assignment that maps all $c \in C$ to 0. Additionally, if $\rho$ is a finite run, the last state of the run is denoted by $\mathit{last}(\rho)$.

The set of actions \textsf{Act} (cf.\ Definition \ref{def:ta})
is assumed to consists of symbols for input actions $a?$, output actions $a!$ and internal actions $\ast$.
Synchronous communication between different TA is done by hand-shake synchronization using input and output actions.

To model concurrent systems, several TAs can be extended with parallel composition that takes into account the synchronous communication.
Parallel composition of TAs is also called network of timed automata (NTA).
Essentially the parallel composition of a set of TAs is the product of the TAs.
Building the product timed automaton is an entirely syntactical but computationally expensive operation.
The reader is referred to \cite[Sec.\ 5]{Bengtsson2004} for an example on the composition of two TAs.

The semantics of an NTA are defined as a transition system where a state consists of a vector of current locations and the current value of clocks in all TAs \cite{uppaal2004}.


\subsection{Timed Game Automata}
\label{sec:modelprel-tga}


A timed game automaton is a timed automaton in which the set of actions is partitioned into controllable and uncontrollable actions.
The former are actions that can be triggered by the controller, whereas the latter only by the environment/opponent.

\begin{definition}[Timed Game Automaton]\label{def:tga}
A timed game automaton $\mathsf{TGA}$ is a septuple $(L,\ell_0,\mathsf{Act}_c,\mathsf{Act}_u,C,E,\mathsf{Inv})$ where
\begin{itemize}
\item $(L,\ell_0,\mathsf{Act}_c \cup \mathsf{Act}_u,C,E,\mathsf{Inv})$ is a timed automaton;
\item $\mathsf{Act}_c$ is a set of controllable actions;
\item $\mathsf{Act}_u$ is a set of uncontrollable actions;
\item $\mathsf{Act}_c \cap \mathsf{Act}_u = \emptyset$.
\end{itemize}
\end{definition}

Similar to TA, TGA can also be extended with parallel composition (essentially the synchronized cartesian product of TGA).
The parallel composition of TGAs is called a ``network of timed game automata" (NTGA) which is formally defined as: 


\begin{definition}[Parallel Composition]\label{def:tga-composition}
Let $\mathsf{TGA}^i = (L^i,\ell_0^i,$ $\mathsf{Act}_c^i,\mathsf{Act}_u^i,C^i,E^i,\mathsf{Inv}^i)$
be a timed game automaton
for $i \in \{1,\dots,n\}$.
The parallel composition of $\mathsf{TGA}_1$, $\dots$, $\mathsf{TGA}_n$ denoted by $\mathsf{TGA}_1 \mid \dots \mid \mathsf{TGA}_n$ is a timed game automaton $\mathsf{TGA} = (L,\ell_0,\mathsf{Act}_c,\mathsf{Act}_u,C,E,\mathsf{Inv})$ where
\begin{itemize}
\item $L = L^1 \times \dots \times L^n$;
\item $\ell_0 = (\ell_0^1,\dots,\ell_0^n)$;
\item $\mathsf{Act}_c = \{ \ast \} \cup \bigcup_{i=1}^n \{ a \in \mathsf{Act}_c^i \mid a \text{ is an internal action} \}$;
\item $\mathsf{Act}_u = \{ \circledast \}  \cup \bigcup_{i=1}^n \{ a \in \mathsf{Act}_u^i \mid a \text{ is an internal action} \}$;
\item $C = C^1 \cup \dots \cup C^n$;
\item $E$ is defined according to the following two rules:
\begin{itemize}
\item a TA makes a move on its own via its internal action: the edge is controllable iff the internal action is controllable;
\item two TAs move simultaneously via a synchronizing action: the edge is controllable iff both input and output actions are controllable (i.e.\ the environment has priority over the controller);
\end{itemize}
\item $\mathsf{Inv}((\ell_1,\dots,\ell_n)) = \mathsf{Inv}^1(\ell_1) \wedge \dots \wedge \mathsf{Inv}^n(\ell_n)$.
\end{itemize}
\end{definition}

In the parallel composition of TGAs, a pair of input and output actions is denoted as a single action.
Thus the sets $\mathsf{Act}_c$ and $\mathsf{Act}_u$ do not contain any input and output actions.
A synchronizing action should be defined as an element of $\mathsf{Act}_c$ if it is controllable and an element of $\mathsf{Act}_u$ if it is not controllable.
In Definition \ref{def:tga-composition}, let us remark that both $\mathsf{Act}_c$ and $\mathsf{Act}_u$ do not contain synchronizing actions for simplicity.
Any controllable synchronizing action is denoted by $\ast$, whereas any uncontrollable synchronizing action is denoted by $\circledast$.


Given an NTGA, we are interested in solving the following safety objective:
is it possible to find a strategy for the triggering of controllable actions guaranteeing that a set of pre-specified bad states are never reached regardless of what and when uncontrollable actions take place?
More formally given an NTGA and a set of bad states $\mathcal A$,
we seek to construct a strategy $f$ such that the NTGA supervised by $f$ constantly avoids $\mathcal A$.

A strategy \cite{Maler1995} is a function that during the course of a game constantly gives information about what the controller should do in order to win the game.
At any given situation, the strategy could suggest the controller to either ``take a particular controllable action'' or ``do nothing at this point in time'', i.e. delay, 
which will be denoted by the symbol (controllable action) $\lambda$.

\begin{definition}[Strategy~{\cite[Definition 3]{Cassez2005}}]\label{def:strategy}
Let $\mathsf{TGA} = (L,\ell_0,$ $\mathsf{Act}_c,\mathsf{Act}_u,C,E,\mathsf{Inv})$ be a timed game automaton.
We define $\mathsf{TA} = (L,\ell_0,\mathsf{Act}_c \cup \mathsf{Act}_u,C,E,\mathsf{Inv})$ as the timed automaton derived from the timed game automaton.
A strategy $f$ over $\mathsf{TGA}$ is a partial function from $\mathsf{Runs}(\mathsf{TA})$ to $\mathsf{Act}_c \cup \{ \lambda \}$ s.t.\ for every finite run $\rho$, if $f(\rho) \in \mathsf{Act}_c$ then $\mathit{last}(\rho) \rTo^{f(\rho)}_{\mathit{TS}} (\ell',u')$ for some $(\ell',u')$.
\end{definition}

A strategy $f$ over $\mathsf{TGA}$ is called state-based or memoryless whenever $\mathit{last}(\rho) = \mathit{last}(\rho')$ implies $f(\rho) = f(\rho')$, for each $\rho,\rho' \in \mathsf{Runs}(\mathsf{TA})$.
The restricted behavior of an NTGA controlled with some strategy $f$ is defined by the notion of outcome \cite{DeAlfaro2001}.

A strategy $f$ is winning from a state if all maximal runs \cite[p.\ 70]{Cassez2005} in the outcome originated from that state are winning.
A state is winning if there exists a winning strategy $f$ from that state.
The winning states can be computed by using backward algorithms \cite{DeAlfaro2001,Maler1995} or on-the-fly algorithms \cite{Cassez2005}.
Software tools are also available that solve safety control problems, e.g.\ the implementation from Verimag \cite{Asarin1998} or UPPAAL-Tiga \cite{Cassez2005}, which
implement the backward and on-the-fly algorithms respectively.

\subsection{Event Triggered Control Systems}
\label{sec:modelprel-etc}

We consider linear-time-invariant (LTI) systems of the form
\begin{equation}
\dot \xi(t) = A \xi(t) + B \upsilon(t), \quad \xi(t) \in \mathbb{R}^n, \quad \upsilon(t) \in \mathbb{R}^m
\label{eqn:lti-system}
\end{equation}
where $A$ and $B$ are matrices of appropriate dimensions.
We assume the existence of linear state-feedback laws $\upsilon(t) = K \xi(t)$ rendering the closed-loop system globally asymptotically stable, where $K$ is a matrix of appropriate dimensions.

Assume a sample-and-hold implementation of the control law is in place keeping
the input signal constant between update times, i.e.
\begin{equation}
\begin{array}{l}
\upsilon(t) = K \xi(t_k), \quad t \in [t_k,t_{k+1}[,
\end{array}
\label{eqn:lti-sample-hold}
\end{equation}
where $t_0,t_1,\dots$ is a divergent sequence of update times.
For simplicity of presentation, we ignore the presence of delays between reading the state and updating the actuators.
The interested reader is referred to \cite{Tabuada} for more details, including accounting for delays.

In event-triggered implementations, the sequence 
of update times is 
decided on run-time based on the state of the plant \cite{Tabuada}.
%
Let $\xi(t)$ represent the solution of \eqref{eqn:lti-system}-\eqref{eqn:lti-sample-hold}.
%
We define an auxiliary variable $e(t)$ representing the difference between the sampled state $\xi(t_k)$ and the current state $\xi(t)$ of the system:
\begin{displaymath}
\begin{array}{l}
e(t) = \xi(t_k) - \xi(t), \qquad t \in [t_k,t_{k+1}[, \quad k \in \mathbb{N}_0.
\end{array}
\end{displaymath}
The event-triggering approach in \cite{Tabuada},
proposes the following sampling triggering law:
\begin{equation}
\begin{array}{l}
t_{k+1} = \min \{t \mid t> t_k \text{ and } | e(t) |^2 \geq \sigma | \xi(t) |^2 \},
\label{eqn:tabtrig}
\end{array}
\end{equation}
where $\sigma \in ]0, \bar\sigma[ \subset \mathbb{R}^+$ is the triggering coefficient, which establishes a trade off between quality of control (convergence rate to the equilibrium) and the amount of transmissions triggered.
The inter-sample time of the state $x$, denoted by $\tau_\sigma(x)$, is defined as the time between consecutive updates when the sampled state is $x$:
\begin{equation}
\begin{array}{l}
\tau_\sigma(x) = \min \{t \mid | e(t) |^2 \geq \sigma | \xi(t) |^2 \text{ and } \xi(0)=x \}.
\label{eqn:tabtrig-1}
\end{array}
\end{equation}

\subsection{Abstraction of Event Triggered Control Systems as Timed Automata}
\label{sec:modelprel-abs}


In \cite{ArmanTCoN}, the authors propose an approach to characterize the sampling behavior of LTI systems with event-triggered implementation as TAs.
The approach abstracts the spatial and temporal dependencies of the original system.
The following definitions summarize the approach.

\begin{definition}[Flow Pipe]
The set of reachable states or the flow pipe at the time interval $[t_1,t_2]$ from a set of initial states $X_0$ is denoted by
\begin{displaymath}
\mathcal X_{[t_1,t_2]}(X_0) = \bigcup_{t \in [t_1,t_2]} \{ \xi(t) \mid \xi(0) \in X_0 \}.
\end{displaymath}
\end{definition}

\begin{definition}[\cite{ArmanTCoN}]\label{def:lti2ta-arman}
A timed automaton abstracting the triggering timing behavior of system \eqref{eqn:lti-system}-\eqref{eqn:lti-sample-hold} with triggering coefficient $\sigma$ is given by\\ $\mathsf{TA}^{\sigma} = (L^{\sigma},\ell_0^{\sigma},\mathsf{Act}^{\sigma},C^{\sigma},E^{\sigma},\mathsf{Inv}^{\sigma})$ where
\begin{itemize}
\item $L^{\sigma} = \{ \mathcal R_1^{\sigma},\dots,\mathcal R_q^{\sigma} \}$;
\item $\ell_0^{\sigma} = \mathcal R_s^{\sigma}$ such that $\xi(0) \in \mathcal R_s^{\sigma}$;
\item $\mathsf{Act}^{\sigma} = \{ \ast \}$;
\item $C^{\sigma} = \{ c \}$;
\item $(\mathcal R_s^{\sigma}, \ubar{\tau}_s^{\sigma} \leq c \leq \bar\tau_s^{\sigma}, \ast, \{c\}, \mathcal R_t^{\sigma}) \in E^{\sigma}$ if
$\mathcal X_{[\ubar{\tau}_s^{\sigma},\bar\tau_s^{\sigma}]}(\mathcal R_s^{\sigma}) \cap \mathcal R_t^{\sigma} \neq \emptyset$;
\item $\mathsf{Inv}^{\sigma}(\mathcal R_s^{\sigma}) = \{ c \mid 0 \leq c \leq \bar\tau_s^{\sigma} \}$ for all $s \in \{1,\dots,q\}$.
\end{itemize}
\end{definition}

Each location $\mathcal R_s^{\sigma}$ of this TA, is associated with a set of possible states $x$ of the system \eqref{eqn:lti-system}. 
We abuse slightly notation denoting both the location and the associated region with the same symbol $\mathcal R_s^\sigma$.
The suggestion in \cite{ArmanTCoN} is to partition the state space of the control system in conic regions pointed at the origin, each of which would be associated to a location of the TA.  The TA has one clock variable $c$ that represents the time elapsed since the last update. 
According to \cite{Tabuada}, given a fixed sampled state, the inter-sample time is uniquely defined, i.e.\ it is deterministic.
In general, when the sampled state is different, the inter-sample time is also different.
The notation $\ubar{\tau}_s^{\sigma}$ and $\bar\tau_s^{\sigma}$ represents the lower and upper bounds of the inter-sample time
for sampled states in $\mathcal R_s^{\sigma}$.
In \cite{ArmanTCoN} it is shown formally that the TA abstracts the timing behavior of the event-triggered system, implying that:
$$
\forall s\in\lbrace1,\ldots,q\rbrace, \forall x \in \mathcal R_s^{\sigma}: \tau_\sigma(x)\in[\ubar{\tau}_s^{\sigma}, \bar\tau_s^{\sigma}].
$$
\begin{remark}
In principle it could happen that $\tau_\sigma(x)=\infty$ for some states $x$ of the system. In practice, one would always impose a maximum time between transmissions to maintain a minimum level of feedback. This practical solution is suggested in \cite{ArmanTCoN} to guarantee having always $\bar\tau_s^{\sigma}<\infty$, as otherwise the TA model would become useless for scheduling purposes.
\end{remark}
From Definition \ref{def:lti2ta-arman} it is also trivial to see that if $\xi(t) = x \in \mathcal R_s^{\sigma}$ then $\xi(t+\tau_\sigma(x)) \in \mathcal R_j^{\sigma}$, with $\mathcal R_j^{\sigma}$ being one of the end locations in the set of edges with starting location $\mathcal R_s^{\sigma}$.
The outgoing edges of $\mathcal R_s^{\sigma}$ are enabled if the time elapsed since the last update is between $\ubar\tau_s^{\sigma}$ and $\bar\tau_s^{\sigma}$.
Only one action denoted by $\ast$ is present in this model, and since taking any edge is interpreted as updating the input value,
all edges are labeled with action $\ast$ and reset the clock variable.
Note that the system may remain in location $\mathcal R_s^{\sigma}$ for at most $\bar\tau_s^{\sigma}$ time units, as a triggering event is guaranteed to happen before that instant. 
A graphical representation of a simple TA of the form of those from Definition \ref{def:lti2ta-arman} is shown in Fig.\ \ref{fig:ta-lti-arman}.

\begin{figure}[!t]
\centering
\includegraphics{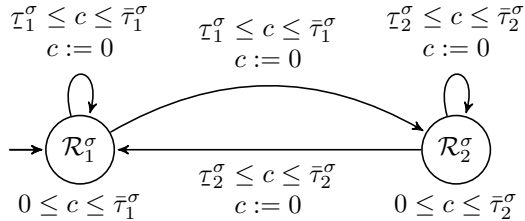}
\caption{A timed automaton modeling a control loop with event-triggered implementation. The unsourced arrow indicates the initial location. The internal action $\ast$ is omitted in the figure.}\label{fig:ta-lti-arman}
\end{figure}

\section{Scheduling of Event-Triggered Control Systems}
\label{sec:scheduling}

Consider a set of event-triggered networked control systems (NCSs) sharing a common communication channel (cf.\ Fig.\ \ref{fig:ncs}). Each control loop consists of a sensor, a plant, an actuator, and a controller, interconnected through the shared communication network.
Assume that the network can be used by at most one control loop at any time instant.
If several control loops request access to the channel while the network is in use a conflict arises, and at most
one control loop will be chosen nondeterministically to access the network.
While in time-triggered control systems these type of problems can be prevented by appropriate scheduling, when one or several control-loops are event-triggered a-priori scheduling is a much more challenging task because of the unknown update times.

In this section, we propose an approach based on NTGA to avoid such conflicts.
We consider schedulers that after each update of a control loop (transmission of measurements, computation of control and transmission of actuation signal to actuators) decide whether the next update time of each control loop should: 
\begin{itemize}
\item be based on a triggering mechanism selected from a set of finitely many triggering coefficients $\{\sigma_1,\ldots, \sigma_p\}$; or
\item forced to be at a pre-defined time (earlier than the minimum expected inter-sample time). 
\end{itemize}

We synthesize scheduling strategies by: constructing an NTGA from a set of NCSs (cf.\ Section \ref{sec:scheduling-model}),
characterizing the bad states, i.e.\ states corresponding to a communication conflict (cf.\ Section \ref{sec:scheduling-spec}), and 
finally synthesizing a supervising strategy ensuring that the NTGA avoids the bad states.

\subsection{Model: Network of Timed Game Automata}
\label{sec:scheduling-model}

In what follows, we describe the procedure employed to construct an NTGA from a given set of event-based NCSs.
We start constructing a TGA associated with the shared communication network.
Next, for each event-triggered control loop, we generate a TGA as a modification of the TA described in Section \ref{sec:modelprel-abs}.
Finally, the NTGA is obtained by taking the parallel composition of the TGA associated with the network and with all control loops.

\subsubsection{Communication Network}
\label{sec:scheduling-model-cu}

Denote the TGA corresponding to the shared communication network by $\mathsf{TGA}^{\mathit{net}}$ (cf.\ Definition \ref{def:tga-net}).
$\mathsf{TGA}^{\mathit{net}}$ has three locations $\mathit{Idle}$, $\mathit{InUse}$ and $\mathit{Bad}$, where the initial location is $\mathit{Idle}$ (cf.\ Fig.\ \ref{fig:tga-cu}).
The location $\mathit{Idle}$ represents the network being available,
$\mathit{InUse}$ represents the network being used by a control loop and
$\mathit{Bad}$ represents a conflict occured.
The active location changes from $\mathit{Idle}$ to $\mathit{InUse}$ when a control loop requests access to the channel to perform an update, 
which also forces the reset of the clock variable $c$.
The channel is occupied for $\Delta$ time units before the network is freed again to service the control tasks.
During this time, the active location is $\mathit{InUse}$, and after that time the active location changes to $\mathit{Idle}$.
When the active location is $\mathit{InUse}$ and
another control loop requests access then the active location changes to $\mathit{Bad}$. 
Once the network enters the location $\mathit{Bad}$, the network cannot leave the location, i.e.\ $\mathit{Bad}$ is an absorbing location.
Notice that this is a somewhat conservative model, as we consider every control loop occupies the channel the whole time $\Delta$. 
One could trivially adjust this simple model, and the subsequent work, to associate different occupancy times to different control loops.

\begin{definition}\label{def:tga-net}
Let $\Delta$ represent the maximum channel occupancy time,
a timed game automaton associated with the communication network is given by $\mathsf{TGA}^{\mathit{net}} = (L^{\mathit{net}},\ell_0^{\mathit{net}},\mathsf{Act}_c^{\mathit{net}},$ $\mathsf{Act}_u^{\mathit{net}},C^{\mathit{net}},E^{\mathit{net}},\mathsf{Inv}^{\mathit{net}})$ where
\begin{itemize}
\item $L^{\mathit{net}} = \{ \mathit{Idle}, \mathit{InUse}, \mathit{Bad} \}$;
\item $\ell_0^{\mathit{net}} = \mathit{Idle}$;
\item $\mathsf{Act}_c^{\mathit{net}} = \{ \ast \}$;
\item $\mathsf{Act}_u^{\mathit{net}} = \{ \mathit{up}? \}$;
\item $C^{\mathit{net}} = \{ c \}$;
\item $E^{\mathit{net}} = \{ ( \mathit{Idle}, \mathit{true}, \mathit{up}? , \{ c \} , \mathit{InUse} ),
( \mathit{InUse} , c = \Delta , \ast , \emptyset , \mathit{Idle} ), $\\
$( \mathit{InUse}, true, \mathit{up}? , \emptyset , \mathit{Bad} ),
( \mathit{Bad}, \mathit{true}, \mathit{up}? , \emptyset , \mathit{Bad} )
 \}$;
\item $\mathsf{Inv}^{\mathit{net}}(\mathit{InUse}) = \{c \mid 0 \leq c \leq \Delta \}$,\\ $\mathsf{Inv}^{\mathit{net}}(\mathit{Idle}) = \{c \mid c \geq 0 \}$, $\mathsf{Inv}^{\mathit{net}}(\mathit{Bad}) = \{c \mid c \geq 0 \}$.
\end{itemize}
The guard $\mathit{true}$ represents a condition that is always satisfied, for example $c \geq 0$.
\end{definition}

\begin{figure}[!t]
\centering
\includegraphics{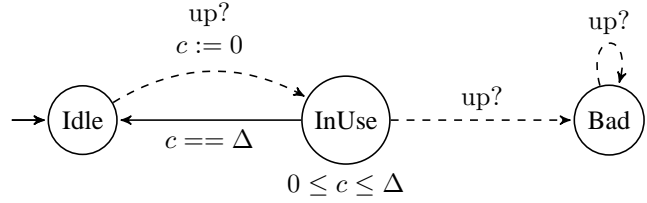}
\caption{A timed game automaton modeling the shared communication network. The solid and dashed arrows represent controllable and uncontrollable edges, respectively.}\label{fig:tga-cu}
\end{figure}

\subsubsection{Control Loops}
\label{sec:scheduling-model-lti}

Given a control loop,
we construct the timed game automata $\mathsf{TGA}^\mathit{cl}$ allowing a supervisor (scheduler) to either: force earlier controller updates than those dictated by the event-triggering mechanism, or choose a triggering coefficient for the event-triggering mechanism.

\begin{definition}\label{def:lti2ta-cl}
Consider a set of timed automata $\mathsf{TA}^{\sigma_j} = (L^{\sigma_j},\ell_0^{\sigma_j},\mathsf{Act}^{\sigma_j},C^{\sigma_j},E^{\sigma_j},\mathsf{Inv}^{\sigma_j})$ generated from an event-triggered control loop with triggering coefficient $\sigma_j\in]0,\bar\sigma[$ for $j \in \{1,\dots,p\}$ and assume that $\mathcal R_s^{\sigma_1} = \dots = \mathcal R_s^{\sigma_p}$ for all $s \in \{1,\dots,q\}$.\\
Consider also a set of earlier update time parameters $\{\ubar{d}_1,\bar d_1,\dots,\ubar{d}_q,\bar d_q\}$, such that $$\forall s \in \{1,\dots,q\}\,\exists\, j \in \{1,\dots,p\}:\,\bar d_s \leq \ubar{\tau}_s^{\sigma_j}.$$
Then, the timed game automata $\mathsf{TGA}^{\mathit{cl}}$ is given by\\
$\mathsf{TGA}^{\mathit{cl}} = (L^{\mathit{cl}},\ell_0^{\mathit{cl}},\mathsf{Act}_c^{\mathit{cl}},\mathsf{Act}_u^{\mathit{cl}},C^{\mathit{cl}},E^{\mathit{cl}},\mathsf{Inv}^{\mathit{cl}})$ where
\begin{itemize}
\item $L^{\mathit{cl}} = \bigcup_{j=1}^p L^{\sigma_j} \cup \bigcup_{s=1}^q \{ \mathcal R_s, \mathit{Ear}_s \}$;
\item $\ell_0^{\mathit{cl}} = \mathcal{R}_s$ such that $\xi(0) \in \mathcal{R}_s^{\sigma_1}$;
\item $\mathsf{Act}_c^{\mathit{cl}} =   \mathsf{Act}^{\sigma_1} \cup \bigcup_{j=1}^p \{a_j^{\mathit{cl}}\}$;
\item $\mathsf{Act}_u^{\mathit{cl}} = \{ \mathit{up}! \}$;
\item $C^{\mathit{cl}} = C^{\sigma_1}$;
\item $E^{\mathit{cl}} = \bigcup_{s=1}^q \bigcup_{t \in \mathcal E_s} \{ ( \mathit{Ear}_s, c = 0, \mathit{up}!, \emptyset, \mathcal R_t ) \}
\cup
\bigcup_{s=1}^q \bigcup_{j=1}^p \{ ( \mathcal R_s, c = 0, a_j^{\mathit{cl}}, \emptyset, \mathcal R_s^{\sigma_j} ),\\
(\mathcal R_s^{\sigma_j},\ubar d_s \leq c \leq \bar d_s, \ast,\{c\},\mathit{Ear}_s) \} \cup
\bigcup_{s=1}^q \bigcup_{j=1}^p \bigcup_{\{t | (\mathcal R_s \to \mathcal R_t) \in E^{\sigma_j}\}}$\\
$\{ ( \mathcal R_s^{\sigma_j}, \ubar{\tau}_s^{\sigma_j} \leq c \leq \bar\tau_s^{\sigma_j},\mathit{up}!,\{c\}, \mathcal R_t ) \}$;
\item $\mathsf{Inv}^{\mathit{cl}}(\mathcal R_s^{\sigma_j}) = \{ c \mid c \leq \bar \tau_s^{\sigma_j} \},\mathsf{Inv}^{\mathit{cl}}(\mathcal R_s) = \{ c \mid c = 0 \}$,\\
$\mathsf{Inv}^{\mathit{cl}}(\mathcal \mathit{Ear}_s) = \{ c \mid c = 0 \}.$
\end{itemize}
\end{definition}

In model just introduced, we use separate locations associated to each triggering coefficient and introduce the
additional locations  $\mathcal R_s$ and $\mathit{Ear}_s$ for $s \in \{1,\dots,q\}$. 
Both the locations $\mathcal R_s$ and $\mathcal R_s^{\sigma_j}$ represent that the sampled state is in $\mathcal R_s^{\sigma_1}$.
In location $\mathcal R_s$ the scheduler has not chosen the triggering coefficient, 
whereas in location $\mathcal R_s^{\sigma_j}$ the scheduler has chosen triggering coefficient $\sigma_j$.
Since the scheduler can choose the triggering coefficient,
the edges from $\mathcal{R}_s$ to $\mathcal{R}_s^{\sigma_j}$ are labeled with the controllable action $a_j$.
After choosing the triggering coefficient, the scheduler is allowed to either:
force earlier controller updates, or use the event-triggering mechanism (based on the chosen triggering coefficient).

If the scheduler decides to use the event-triggering mechanism,
while staying in location $\mathcal R_s^{\sigma_j}$,
the strategy is ``do nothing''.
This ensures that the outgoing edge to $\mathit{Ear}_s$ is not taken.
When the value of $c$ is between $\ubar\tau_s^{\sigma_j}$ and $\bar\tau_s^{\sigma_j}$, the event-triggering mechanism is activated.
In this case, the edges from $\mathcal{R}_s^{\sigma_j}$ to $\mathcal R_t$ labeled with the {uncontrollable} action $\mathit{up}!$ are enabled.
Recall that the scheduler cannot choose the exact update time when using the event-triggering mechanism.
This also implies that the scheduler cannot choose the region containing the next sampled state.

If the scheduler decides to force earlier controller updates,
the scheduler will take the edge to $\mathit{Ear}_s$ when that edge is enabled.
In this case, the scheduler is able to choose the exact update time.
Thus, the edges from $\mathcal{R}_s^{\sigma_j}$ to $\mathit{Ear}_s$ are labeled with the controllable action $\ast$.
In location $\mathit{Ear}_s$, the time cannot elapse and one of the outgoing edges has to be taken immediately.
Since the scheduler cannot choose the region containing the next sampled state, the outgoing edges of $\mathit{Ear}_s$ are labeled with the {uncontrollable} action $\mathit{up}!$. 
The outgoing edges are defined as follows:
there exists an edge from $\mathit{Ear}_s$ to $\mathcal{R}_t$ if $t \in \mathcal E_s := \{ t \mid \mathcal{X}_{[\ubar d_s,\bar d_s]}(\mathcal R_s^{\sigma_1}) \cap \mathcal R_t^{\sigma_1} \neq \emptyset \}$.
A graphical representation of a TGA generated by Definition \ref{def:lti2ta-cl} is shown in Fig.\ \ref{fig:tga-cl-ex}.

\begin{figure}[!t]
\centering
\includegraphics{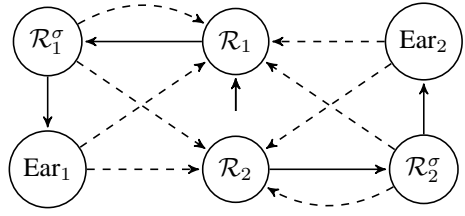}
\caption{A timed game automaton modeling a control loop with an event-triggered implementation. The labels over edges and location invariants are not shown for simplicity.}\label{fig:tga-cl-ex}
\end{figure}

In this subsection, we assume the initial conditions of the LTI system are a subset of a region.
If the initial conditions are intersected with a set of regions $\mathcal R_{\mathit{init}} \subseteq \{ \mathcal R_1^{\sigma_1}, \dots, \mathcal R_q^{\sigma_1} \}$,
we can modify the TGA generated by Definition \ref{def:lti2ta-cl} as follows.
Introduce a new location called $\mathcal R_0$ with invariant $\{ c = 0 \}$ and define $\mathcal R_0$ as the initial location.
Then, define edges from $\mathcal R_0$ to every location in $\mathcal R_{\mathit{init}}$ with guard $c = 0$, action $\circledast$ and without resetting the clock.
Finally, action $\circledast$ is defined as an uncontrollable action.
In the above modification, the environment has to choose one of the locations corresponding to initial conditions when the system is started.
A graphical representation of the TGA representing this situation is depicted in Fig.\ \ref{fig:tga-cl-r0-ex}.

\begin{figure}[!t]
\centering
\includegraphics{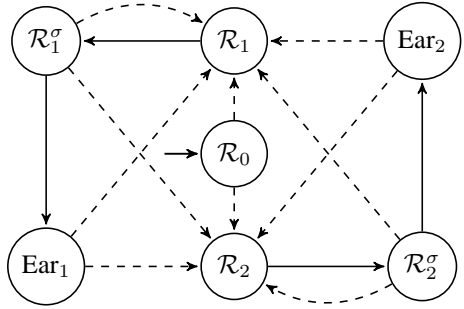}
\caption{A timed game automaton modeling a control loop where the initial states are intersected with both conic regions. The labels over edges and location invariants are not shown for simplicity.}\label{fig:tga-cl-r0-ex}
\end{figure}

\begin{proposition}
Switching between different triggering coefficients or triggering earlier does not hinder stability.
\end{proposition}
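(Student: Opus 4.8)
The plan is to exhibit, for each control loop, a single Lyapunov function that is decreasing along the sampled-data trajectory no matter which of the finitely many options the scheduler exercises at each update instant. First I would recall the Lyapunov certificate behind the triggering rule \eqref{eqn:tabtrig} from \cite{Tabuada}: since $A+BK$ is Hurwitz one picks $P = P^\top \succ 0$ and $Q = Q^\top \succ 0$ with $(A+BK)^\top P + P(A+BK) = -Q$ and sets $V(\xi) := \xi^\top P \xi$; a routine computation using $\xi(t_k) = \xi(t) + e(t)$ shows that, on any inter-sample interval along which $|e(t)|^2 \le \sigma\,|\xi(t)|^2$ holds,
\begin{displaymath}
\dot V(\xi(t)) \;\le\; -\bigl(\lambda_{\min}(Q) - 2\sqrt{\sigma}\,\|PBK\|\bigr)\,|\xi(t)|^2 \;\le\; -\alpha(\sigma)\,V(\xi(t)),
\end{displaymath}
where $\alpha(\sigma) > 0$ exactly when $\sigma < \bar\sigma$ --- indeed this positivity is what determines the admissible threshold $\bar\sigma$ in \eqref{eqn:tabtrig}. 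The key observation is that this estimate depends only on the error bound that is enforced along the continuous trajectory, not on which location of the abstraction the scheduler believes the loop to occupy.

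Next I would fix an arbitrary scheduling strategy and show that on every inter-sample interval $[t_k,t_{k+1})$ one still has $\dot V \le -\alpha_\star\,V$, where $\alpha_\star := \min_{j \in \{1,\dots,p\}} \alpha(\sigma_j)$ is strictly positive because the set of triggering coefficients is finite and contained in $]0,\bar\sigma[$. Two cases arise. If on that interval the scheduler uses the event-triggering mechanism with coefficient $\sigma_j$, then by definition of \eqref{eqn:tabtrig} the bound $|e(t)|^2 \le \sigma_j\,|\xi(t)|^2$ holds throughout, so $\dot V \le -\alpha(\sigma_j)\,V \le -\alpha_\star\,V$. If instead the scheduler forces an earlier update, with $\xi(t_k) \in \mathcal R_s^{\sigma_1}$ and $t_{k+1} - t_k \le \bar d_s$, I would invoke the standing hypothesis to pick an index $j_0$ with $\bar d_s \le \ubar{\tau}_s^{\sigma_{j_0}}$; together with the abstraction guarantee $\tau_{\sigma_{j_0}}(\xi(t_k)) \ge \ubar{\tau}_s^{\sigma_{j_0}}$ this gives $t_{k+1} - t_k \le \bar d_s \le \ubar{\tau}_s^{\sigma_{j_0}} \le \tau_{\sigma_{j_0}}(\xi(t_k))$, so by \eqref{eqn:tabtrig-1} and continuity $|e(t)|^2 \le \sigma_{j_0}\,|\xi(t)|^2$ on the whole interval, and again $\dot V \le -\alpha(\sigma_{j_0})\,V \le -\alpha_\star\,V$.

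Finally, since $V$ is continuous and the estimate $\dot V \le -\alpha_\star V$ holds piecewise over the intervals $[t_k,t_{k+1})$, a standard comparison argument gives $V(\xi(t)) \le e^{-\alpha_\star (t - t_0)}\,V(\xi(t_0))$ for all $t \ge t_0$, i.e.\ global exponential --- hence asymptotic --- stability of each closed loop under the scheduler. Well-posedness (divergence of the update times, absence of Zeno behaviour) follows because the event-triggered intervals admit a uniform positive lower bound by \cite{Tabuada} and the forced ones are bounded below by $\min_s \ubar{d}_s > 0$. The one genuinely delicate step, and the one I would treat with most care, is the forced-earlier case: one must verify that the hypothesis $\forall s\ \exists j:\ \bar d_s \le \ubar{\tau}_s^{\sigma_j}$ always supplies --- uniformly over the region $\mathcal R_s^{\sigma_1}$, and independently of whichever coefficient the scheduler had selected before deciding to trigger early --- a coefficient that certifies the error bound on the shortened interval; the remaining ingredient, finiteness of $\{\sigma_1,\dots,\sigma_p\}$, is precisely what keeps the decay rate $\alpha_\star$ bounded away from zero.
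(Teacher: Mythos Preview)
Your proof is correct and follows essentially the same route as the paper's: both arguments fix the common quadratic Lyapunov function from \cite{Tabuada}, show that on every inter-sample interval the error inequality $|e|^2\le\sigma|\xi|^2$ holds for some admissible $\sigma\in]0,\bar\sigma[$ (either the scheduler's chosen $\sigma_j$ or, in the forced-early case, the $\sigma_{j_0}$ supplied by the hypothesis $\bar d_s\le\ubar\tau_s^{\sigma_{j_0}}$), and then take the worst decay rate over the finite set $\{\sigma_1,\dots,\sigma_p\}$. Your $\alpha_\star=\min_j\alpha(\sigma_j)$ coincides with the paper's $\lambda_e(\tilde\sigma)$, $\tilde\sigma=\max_j\sigma_j$, by monotonicity of the decay rate in $\sigma$; your treatment of the early-update case is in fact a bit more explicit than the paper's in spelling out the chain $t_{k+1}-t_k\le\bar d_s\le\ubar\tau_s^{\sigma_{j_0}}\le\tau_{\sigma_{j_0}}(\xi(t_k))$, and your added remark on Zeno-freeness is a welcome supplement not present in the original.
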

\begin{proof}
Consider Lyapunov function $V : \mathbb{R}^n \to \mathbb R_{\geq 0}$ satisfying $\dot{V}(\xi(t))\leq -\lambda_c V(\xi(t))$, with $\lambda_c>0$ for the system (\ref{eqn:lti-system}) with continuous feedback $u(t)=K\xi(t)$. 
It has been shown in \cite{Tabuada} that selecting a triggering coefficient $0<\sigma < \bar\sigma$, with $\bar\sigma$ an appropriate constant depending on the LTI dynamics and the state-feedback gain, the event-triggered controller implementation (\ref{eqn:lti-sample-hold})-(\ref{eqn:tabtrig}) satisfies $\forall t:\, \dot{V}(\xi(t))\leq -\lambda_e(\sigma) V(\xi(t))$, with 
$\lambda_c>\lambda_e(\sigma_1)>\lambda_e(\sigma_2)>0$ for $0<\sigma_1<\sigma_2<\bar\sigma$. 

In fact, the triggering mechanism guarantees that $\dot{V}(\xi(t))\leq -\lambda_e(\sigma) V(\xi(t))$ in the interval $t\in[t_k, t_{k}+\tau_\sigma(x)]$,  for all $x=\xi(t_k)$.
Since $\sigma_j < \bar\sigma$ for all $j \in \{1,\dots,p\}$, switching between different triggering coefficients guarantees that $\forall t:\, \dot{V}(\xi(t))\leq -\lambda_e{\tilde\sigma}V(\xi(t))$, with $\tilde\sigma:=\max_{j \in \{1,\dots,p\}} \sigma_j$.

Finally, if the system is forced to employ earlier triggering the assumption:
$\forall s \in \{1,\dots,q\}\,\exists\, j \in \{1,\dots,p\}:\,\bar d_s \leq \ubar{\tau}_s^{\sigma_j}$
guarantees that the update occurs in the interval $[t_k, t_k+\tau_{\tilde\sigma}(x)]$, and thus $\forall t:\, \dot{V}(\xi(t))\leq -\lambda_e{\tilde\sigma}V(\xi(t))$, which concludes the proof.
\end{proof}

As it was mentioned in the beginning of Section \ref{sec:scheduling-model},
the NTGA associated with a set of NCSs, denoted by $\mathsf{TGA}^{\mathit{NCSs}}$, is obtained by taking the parallel composition of the TGA associated with the network and with all control loops.
In other words,
$\mathsf{TGA}^{\mathit{NCSs}} := \mathsf{TGA}^{\mathit{net}} \mid \mathsf{TGA}^{\mathit{cl1}} \mid \dots \mid \mathsf{TGA}^{\mathit{clN}}$ where
$\mathsf{TGA}^{\mathit{cli}}$, $i \in \{1,\dots,N\}$, represents the TGA associated with the $i$-th control loop.
The state of $\mathsf{TGA}^{\mathit{NCSs}}$ is described by a $(2N+2)$-tuple $(\ell_{\mathit{net}},\ell_1,\dots,\ell_N,u_{\mathit{net}},u_1,\dots,u_N)$ where
$\ell_{\mathit{net}}$ is the location of $\mathsf{TGA}^{\mathit{net}}$,
$\ell_i$ is the location of $\mathsf{TGA}^{\mathit{cli}}$,
$u_{\mathit{net}}$ is the clock assignment of $\mathsf{TGA}^{\mathit{net}}$ and
$u_i$ is the clock assignment of $\mathsf{TGA}^{\mathit{cli}}$,
for $i \in \{1,\dots,N\}$.

\subsection{Specification: Safety}
\label{sec:scheduling-spec}

We are interested in finding a strategy such that the trajectories of the NTGA never enter the states corresponding to a conflict.
Recall that a conflict corresponds to the following situation:
a control loop is requesting updates when the communication network is busy.
In our NTGA model, conflicts are captured by
the active location of $\mathsf{TGA}^{\mathit{net}}$ becoming $\mathit{Bad}$.
Thus the set of states we aim at avoiding $\mathcal A$ contains all states such that the location of $\mathsf{TGA}^{\mathit{net}}$ is $\mathit{Bad}$, i.e.
\begin{displaymath}
\mathcal A = \{ (\ell_{\mathit{net}},\ell_1,\dots,\ell_N,u_{\mathit{net}},u_1,\dots,u_N) \mid \ell_{\mathit{net}} = \mathit{Bad} \}.
\end{displaymath}


\subsection{Limiting the Consecutive Earlier Updates}
\label{sec:scheduling-limit}

If we allow the scheduler to force earlier controller updates, nothing prevents the scheduler from always using such type of updates and never employ the event-triggering mechanism.
In this section, we discuss an approach to prevent the undesired schedule by limiting the consecutive earlier updates.
In this approach, once a pre-specified limit has been reached, the scheduler is forced to use an event-triggering mechanism.

For this purpose we employ global integer variables, which are an extended feature of UPPAAL-Tiga modeling language to ease the modeling task but not part of the standard definition of TGA (cf.\ Definition \ref{def:tga}).
We define a global integer constant $\mathtt{earMax}$ representing the maximum consecutive earlier updates, and a global integer variable $\mathtt{earNum}$ to be used as a counter of consecutive earlier updates. A variable is global if it can be accessed by all TGAs.
Finally, the resulting TGA is defined as follows.

\begin{definition}\label{def:lti2ta-cl-max}
Consider a set of timed automata $\mathsf{TA}^{\sigma_j} = (L^{\sigma_j},\ell_0^{\sigma_j},\mathsf{Act}^{\sigma_j},C^{\sigma_j},E^{\sigma_j},\mathsf{Inv}^{\sigma_j})$ generated from an event-triggered control loop with triggering coefficient $\sigma_j\in]0,\bar\sigma[$ for $j \in \{1,\dots,p\}$ and assume that $\mathcal R_s^{\sigma_1} = \dots = \mathcal R_s^{\sigma_p}$ for all $s \in \{1,\dots,q\}$.\\
Consider also some constant $\mathtt{earMax}$ and a set of earlier update time parameters $\{\ubar{d}_1,\bar d_1,\dots,\ubar{d}_q,\bar d_q\}$, such that $$\forall s \in \{1,\dots,q\}\,\exists\, j \in \{1,\dots,p\}:\,\bar d_s \leq \ubar{\tau}_s^{\sigma_j}.$$
Then, the timed game automata with options for earlier update, choice of triggering coefficients and limiting the consecutive earlier updates is given by
$\mathsf{TGA}^{\mathit{clim}} = (L^{\mathit{clim}},\ell_0^{\mathit{clim}},\mathsf{Act}_c^{\mathit{clim}},\mathsf{Act}_u^{\mathit{clim}},C^{\mathit{clim}},E^{\mathit{clim}},\mathsf{Inv}^{\mathit{clim}})$ where
\begin{itemize}
\item $L^{\mathit{clim}} = \bigcup_{j=1}^p L^{\sigma_j} \cup \bigcup_{s=1}^q \{ \mathcal R_s, \mathit{Ear}_s \}$;
\item $\ell_0^{\mathit{clim}} = \mathcal{R}_s$ such that $\xi(0) \in \mathcal{R}_s^{\sigma_1}$;
\item $\mathsf{Act}_c^{\mathit{clim}} = \{ \mathit{up}! \} \cup \mathsf{Act}^{\sigma_1} \cup \bigcup_{j=1}^p \{a_j^{\mathit{clim}}\}$;
\item $\mathsf{Act}_u^{\mathit{clim}} = \emptyset$;
\item $C^{\mathit{clim}} = C^{\sigma_1}$;
\item $E^{\mathit{clim}} = \bigcup_{s=1}^q \bigcup_{t \in \mathcal E_s} \{ ( \mathit{Ear}_s, c = 0, \mathit{up}!, \emptyset, \mathcal R_t ) \}
\cup$\\
$\bigcup_{s=1}^q \bigcup_{j=1}^p \{ ( \mathcal R_s, c = 0, a_j^{\mathit{clim}}, \emptyset, \mathcal R_s^{\sigma_j} ),$\\
$(\mathcal R_s^{\sigma_j},(\ubar d_s \leq c \leq \bar d_s)
\wedge (\mathtt{earNum} < \mathtt{earMax}),\ast,$\\
$\{c\} \wedge (\mathtt{earNum} := \mathtt{earNum} + 1),\mathit{Ear}_s) \} \cup \bigcup_{s=1}^q \bigcup_{j=1}^p \bigcup_{\{t \mid (\mathcal R_s \to \mathcal R_t) \in E^{\sigma_j}\}}$\\
$\{ ( \mathcal R_s^{\sigma_j}, \ubar{\tau}_s^{\sigma_j} \leq c \leq \bar\tau_s^{\sigma_j},\mathit{up}!,\{c\} \wedge (\mathtt{earNum} := 0), \mathcal R_t ) \}$;
\item $\mathsf{Inv}^{\mathit{clim}}(\mathcal R_s^{\sigma_j}) = \{ c \mid c \leq \bar \tau_s^{\sigma_j} \},
\mathsf{Inv}^{\mathit{clim}}(\mathcal R_s) = \{ c \mid c = 0 \}$.
\end{itemize}
\end{definition}

There are two differences between Definition \ref{def:lti2ta-cl-max} and Definition \ref{def:lti2ta-cl}.
First, in the edges from $\mathcal R_s^{\sigma_j}$ to $\mathit{Ear}_s$, we add condition $\mathtt{earNum} < \mathtt{earMax}$ to the guard and add statement $\mathtt{earNum} := \mathtt{earNum} + 1$ to the reset.
The additional condition is used to guarantee that the counter of consecutive earlier updates is always smaller than or equal to its maximum.
The statement on the reset is used to increase the counter variable by one, once an earlier update happens.
Recall that an earlier update happens when one of these edges is taken (cf.\ Section \ref{sec:scheduling-model-lti}).
Second, in the edges from $\mathcal R_s^{\sigma_j}$ to $\mathcal R_t$,
we add the statement $\mathtt{earNum} := 0$ to the reset.
Recall that taking these edges represents the event-triggering mechanism is used (cf.\ Section \ref{sec:scheduling-model-lti}).
Thus the counter of consecutive earlier updates is reset to zero.
Notice that the variable $\mathtt{earNum}$ takes values in $\{0,1,\dots,\mathtt{earMax}\}$.

\begin{remark}
Note that with the presented implementation, either of the control loops may exhibit an arbitrary number of consecutive earlier triggerings.  This is because the maximum number of consecutive earlier triggerings being a global counter. The counter is reset to zero whenever any of the control loops runs in event-triggered fashion. By employing more counters, one could easily generalize this idea to limit the number of consecutive earlier triggerings for each loop.
\end{remark}
After these modifications, the NTGA associated to the set of NCSs becomes
$\mathsf{TGA}^{\mathit{NCSs}} := \mathsf{TGA}^{\mathit{net}} \mid \mathsf{TGA}^{\mathit{clim1}} \mid \dots \mid \mathsf{TGA}^{\mathit{climN}}$ where
$\mathsf{TGA}^{\mathit{climi}}$ represents the TGA associated with the $i$-th control loop for $i \in \{1,\dots,N\}$.
In this new NTGA, the state of $\mathsf{TGA}^{\mathit{NCSs}}$ is described by a $(2N+3)$-tuple
$(\ell_{\mathit{net}},\ell_1,\dots,\ell_N,u_{\mathit{net}},u_1,\dots,u_N,\mathtt{earNum})$ which includes the additional 
counter $\mathtt{earNum}$. It follows that the bad states $\mathcal A$ are now given by
\begin{displaymath}
\{ (\ell_{\mathit{net}},\ell_1,\dots,\ell_N,u_{\mathit{net}},u_1,\dots,u_N,\mathtt{earNum}) \mid \ell_{\mathit{net}} = \mathit{Bad} \}.
\end{displaymath}

\subsection{Scheduler Operation}
\label{sec:scheduling-description}


Formally, a scheduler implements a strategy $f$, see Definition \ref{def:strategy}, for the NTGA $\mathsf{TGA}^{\mathit{NCSs}}$.
The strategy $f$ is applied to $\mathsf{TGA}^{\mathit{NCSs}}$ providing, based on the run $\rho$ of $\mathsf{TGA}^{\mathit{NCSs}}$ up to that time instant
the controllable action $f(\rho)$ that guarantees the satisfaction of the desired specification.

This means in practice that after each discrete transition of the NTGA, i.e. every time a transmission is placed on the network, first the strategy
chooses a triggering coefficient. 
Then the strategy decides which control loop is updated and also its update mechanism: early or event triggered.
After such a transition, and possibly after some time elapses, the environment chooses the conic region containing the next sampled state, which results in a discrete transition of the NTGA, and the procedure is repeated.

\begin{example}
Let us illustrate the use of strategies on an example consisting of two control loops, two triggering coefficients $\{ \sigma_1, \sigma_2 \}$ and an option for earlier updates. The initial location of the first and second control loop is $\mathcal{R}_1$ and $\mathcal{R}_2$, respectively. Initially the run of $\mathsf{TGA}^{\mathit{NCSs}}$ is 
$$\rho_0 = (\mathit{Idle},\mathcal{R}_1,\mathcal{R}_2,0,0,0).$$ 
After each update, the scheduler selects a triggering coefficient, according to the strategy $f$. 
Suppose that the scheduler chooses $\sigma_2$ for the first control loop, i.e.\ $f(\rho_0) = a_2^{\mathit{cl1}}$. The resulting run is
$$\rho_1 = \rho_0 \rTo^{a_2^{\mathit{cl1}}}_{\mathit{TS}} (\mathit{Idle},\mathcal{R}_1^{\sigma_2},\mathcal{R}_2,0,0,0).$$
If the scheduler chooses $\sigma_1$ for the second control loop, i.e.\ $f(\rho_1) = a_1^{\mathit{cl2}}$, the run becomes
$$
\begin{array}{rl}
& \rho_2 = \rho_1 \rTo^{0}_{\mathit{TS}} (\mathit{Idle},\mathcal{R}_1^{\sigma_2},\mathcal{R}_2,0,0,0) \rTo^{a_1^{\mathit{cl2}}}_{\mathit{TS}} 
 (\mathit{Idle},\mathcal{R}_1^{\sigma_2},\mathcal{R}_2^{\sigma_1},0,0,0).
\end{array}
$$ 
Then the scheduler follows the strategy to decide which control loop is updated and also its update mechanism: early or event triggered.
Suppose that the strategy decides to update the first control loop earlier at time $\ubar d_1$.
First, the scheduler delays the system
$$\rho_3 = \rho_2 \rTo^{\ubar d_1}_{\mathit{TS}} (\mathit{Idle},\mathcal{R}_1^{\sigma_2},\mathcal{R}_2^{\sigma_1},\ubar d_1,\ubar d_1,\ubar d_1).$$ 
Then an earlier update is performed, i.e.\ $f(\rho_3) = \ast$.
Notice that action $\ast$ is the internal action associated with the first control loop.
We have run 
$$\rho_4 = \rho_3 \rTo^{\ast}_{\mathit{TS}} (\mathit{Idle},\mathit{Ear}_1,\mathcal{R}_2^{\sigma_1},\ubar d_1,0,\ubar d_1).$$ 
Since time cannot elapse in $\mathit{Ear}_1$, the environment has to choose the conic region containing the next sampled state immediately.
If the environment chooses $\mathcal R_3$, the result is run 
$$
\begin{array}{rl}
& \rho_5 = \rho_4 \rTo^{0}_{\mathit{TS}} (\mathit{Idle},\mathit{Ear}_1,\mathcal{R}_2^{\sigma_1},\ubar d_1,0,\ubar d_1) \rTo^{\mathit{up}}_{\mathit{TS}} 
 (\mathit{InUse},\mathcal R_3,\mathcal{R}_2^{\sigma_1},0,0,\ubar d_1).
\end{array}
$$
Notice that $\mathsf{TGA}^{\mathit{net}}$ and the TGA associated with the first control loop move simultaneously via synchronizing action $\mathit{up}$.
Input action $\mathit{up}?$ belongs to $\mathsf{TGA}^{\mathit{net}}$, whereas output action $\mathit{up}!$ belongs to the TGA corresponding to the first control loop.
Then the scheduler follows the strategy to select a triggering coefficient for the first control loop, for example $\sigma_1$, i.e.\ $f(\rho_5) = a_1^{\mathit{cl1}}$.
The network is available again after $\Delta$ time units, resulting in the runs:
$$
\begin{array}{rl}
& \rho_6 = \rho_5 \rTo^{0}_{\mathit{TS}} (\mathit{InUse},\mathcal R_3,\mathcal{R}_2^{\sigma_1},0,0,\ubar d_1) \rTo^{a_1^{\mathit{cl1}}}_{\mathit{TS}} \\
& \qquad (\mathit{InUse},\mathcal R_3^{\sigma_1},\mathcal{R}_2^{\sigma_1},0,0,\ubar d_1) \rTo^{\Delta}_{\mathit{TS}} 
 (\mathit{InUse},\mathcal R_3^{\sigma_1},\mathcal{R}_2^{\sigma_1},\Delta,\Delta,\ubar d_1+ \Delta),
\end{array}
$$
while the network is being used, and 
$$\rho_7 = \rho_6 \rTo^{\ast}_{\mathit{TS}} (\mathit{Idle},\mathcal R_3^{\sigma_1},\mathcal{R}_2^{\sigma_1},\Delta,\Delta,\ubar d_1+ \Delta)$$
once the network is released.
Notice that action $\ast$ is the internal action associated with $\mathsf{TGA}^{\mathit{net}}$.
\end{example}

Note that this kind of scheduler is a centralized scheduler that needs to have a perfect overview of the transmissions placed on the network, and the control loop responsible for it. Furthermore, given that the locations of $\mathsf{TGA}^{\mathit{NCSs}}$ are related to the actual sampled states transmitted through the network, the scheduler also needs to be able to read the content of the transmitted data.



\section{Case Study}
\label{sec:case}

We showcase the results in an example
comprising two event-triggered NCSs sharing the same communication network.
The first control loop is given by \cite[p.\ 1683]{Tabuada}
\begin{equation}
\begin{array}{rl}
& \dot\xi =
\begin{bmatrix}
0 & 1\\
-2 & 3
\end{bmatrix}
\xi +
\begin{bmatrix}
0 \\ 1
\end{bmatrix}
\upsilon,\\[2ex]
& \upsilon =
\begin{bmatrix}
1 & -4
\end{bmatrix}
\xi.
\end{array}
\label{eqn:lti-ex-1}
\end{equation}
The second control loop is given by \cite[p.\ 1699]{Hetel2011}
\begin{equation}
\begin{array}{rl}
& \dot\xi =
\begin{bmatrix}
-0.5 & 0\\
0 & 3.5
\end{bmatrix}
\xi +
\begin{bmatrix}
1 \\ 1
\end{bmatrix}
\upsilon,\\[2ex]
& \upsilon =
\begin{bmatrix}
1.02 & -5.62
\end{bmatrix}
\xi.
\end{array}
\label{eqn:lti-ex-2}
\end{equation}

In the sequel, we discuss two experimental results for the above example.
Each experiment is characterized by four parameters:
the number of conic regions $q$,
the set of triggering coefficients $\{ \sigma_1, \dots, \sigma_q \}$,
the set of earlier update parameters $\{ \ubar d_1, \bar d_1, \dots, \ubar d_q, \bar d_q \}$ and
maximum consecutive earlier triggering $\mathtt{earMax}$.


%
%
%
%
\begin{figure}[!t]
\centering
\includegraphics{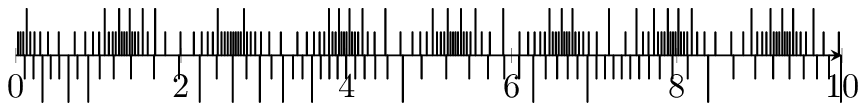}\\[1ex]
\includegraphics{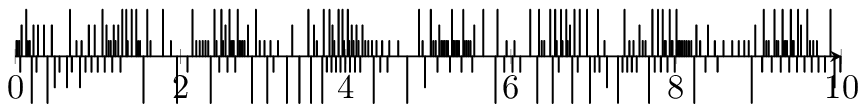}
\caption{Status of the shared communication network up to 10 time units.
The bars on the top and on the bottom of the $x$ axis represent the network is being used by \eqref{eqn:lti-ex-2} and \eqref{eqn:lti-ex-1}, respectively.
The top and bottom plots represent the result of experiments discussed in Sections \ref{sec:case-1} and \ref{sec:case-2}, respectively.}\label{fig:sim-Tab-Het-etc-trig}
\end{figure}
%
%

%
%

\subsection{Limiting the Consecutive Earlier Updates}
\label{sec:case-1}

In this experiment,
the minimum channel occupancy time is $\Delta = 0.005$,
the number of conic regions is $q = 200$ and
there is one triggering coefficient $\sigma_1 = 0.05$.
The input value can be updated 0.005 time units before the lower bound in all regions, i.e.\ $\ubar d_s = \ubar\tau_s - 0.005$ and $\bar d_s = \ubar\tau_s$ for all $s \in \{1,\dots,q\}$.
The maximum consecutive earlier triggering is $\mathtt{earMax} = 4$.

We create a model in UPPAAL-Tiga according to Definition \ref{def:lti2ta-cl-max}: the TGA for \eqref{eqn:lti-ex-1}, \eqref{eqn:lti-ex-2} and the shared communication network are denoted $\mathtt{tgaT}$, $\mathtt{tgaH}$ and $\mathtt{net}$, respectively.
The specification is given by
\begin{verbatim}
control: A[] not( net.Bad )
\end{verbatim}
A strategy is generated with UPPAAL-Tiga to satisfy this specification.
To illustrate the type of strategies synthesized, we show in the following a fragment of the strategy generated by UPPAAL-Tiga for the situation
in which the locations of $\mathtt{tgaT}$, $\mathtt{tgaH}$ and $\mathtt{net}$ are $\mathcal R_1^{\sigma_1}$, $\mathcal R_1^{\sigma_1}$ and $\mathit{Idle}$, respectively.
\begin{verbatim}
State: ( tgaT.R1a1 tgaH.R1a1 net.Idle )
  earNum=3
When you are in (25<=tgaH.c && tgaT.c<65 && tgaH.c-tgaT.c<=-35)
  || (85<tgaT.c && 25<=tgaH.c && tgaT.c<105 && tgaH.c<=30)
  || (38<tgaT.c && 25<=tgaH.c && tgaT.c-tgaH.c<=30 && tgaH.c<=30)
  || (25<=tgaH.c && tgaT.c<31 && tgaH.c-tgaT.c<=-5)
  || (25<=tgaH.c && tgaT.c-tgaH.c<=-5 && tgaH.c<=30),
take transition tgaH.R1a1->tgaH.Ear1
  { c >= 25 && c <= 30 && earNum < earMax, up!, 1 }
  net.Idle->net.InUse { 1, up?, c := 0 }
When you are in (105<=tgaT.c && tgaT.c <=111 && tgaH.c<25),
take transition tgaT.R1a1->tgaT.Ear1
  { c >= 105 && c <= 111 && earNum < earMax, up!, 1 }
  net.Idle->net.InUse { 1, up?, c := 0 }
\end{verbatim}
As shown above, two different conditions, based on the clock values of $\mathtt{tgaT}$, clock values of $\mathtt{tgaH}$ and the difference of clock values in $\mathtt{tgaT}$ and $\mathtt{tgaH}$, can be appreciated:
if the first one is satisfied, an early update is forced for $\mathtt{tgaH}$ where the inter-sample time is between 25 and 30;
if the second condition is satisfied, an early update is forced for $\mathtt{tgaT}$ where the inter-sample time is between 105 and 111.
If none of the conditions are satisfied, no early update is forced, i.e.\ the strategy is to let time elapses for both loops.

The strategy generated by UPPAAL-Tiga was applied to the two NCSs \eqref{eqn:lti-ex-1}-\eqref{eqn:lti-ex-2}, with both systems initialized at the state $[1,100]^T$, 
corresponding to $\mathcal R_1$ in both of the timing abstractions for the systems.
The network status is shown in Fig.\ \ref{fig:sim-Tab-Het-etc-trig} (top), where
long and short bars represent event-triggered and earlier update mechanisms, respectively.
Note that while either of the control loops may exhibit an arbitrary number of consecutive triggerings, the maximum number of consecutive earlier triggerings is respected to be below 4 as this counter is a shared (global) one that is reset to zero whenever any of the two loops run in event-triggered fashion.
During the time horizon of 10, the input of \eqref{eqn:lti-ex-1} is updated 63 times consisting of 50 earlier updates and 13 event-triggering mechanisms.
For \eqref{eqn:lti-ex-2}, the input is updated 152 times consisting of 121 earlier updates and 31 event-triggering mechanisms.
The state and input trajectories are shown in Fig.\ \ref{fig:sim-Tab-Het-etc-early-v2a}.

\begin{figure}[!t]
\centering
\includegraphics{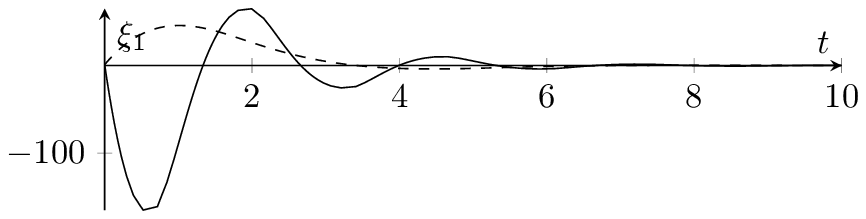}\\
\includegraphics{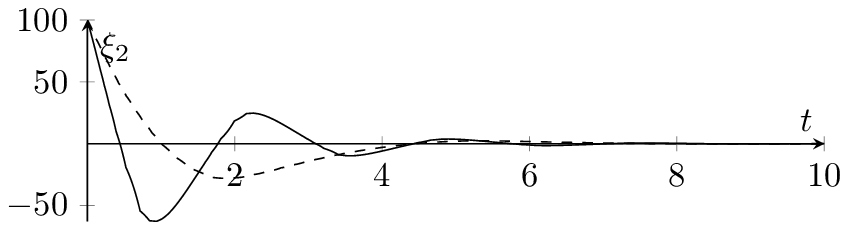}\\[1ex]
\includegraphics{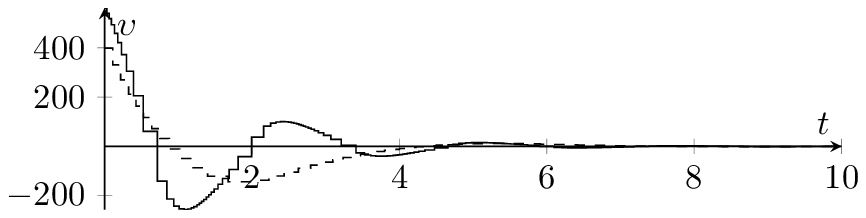}
\caption{The state and input trajectories for the experiment in Section \ref{sec:case-1}. The solid and dashed lines are the trajectories of \eqref{eqn:lti-ex-2} and \eqref{eqn:lti-ex-1}, respectively.}\label{fig:sim-Tab-Het-etc-early-v2a}
\end{figure}

\subsection{Choice of Triggering Coefficients}
\label{sec:case-2}

In this experiment,
the minimum channel occupancy time is $\Delta = 0.005$,
the number of conic regions is $q = 200$ and
there are three triggering coefficients $\sigma_1 = 0.01$, $\sigma_2 = 0.03$ and $\sigma_3 = 0.09$.

We create again a model in UPPAAL-Tiga according to
Definition \ref{def:lti2ta-cl}: the TGA for \eqref{eqn:lti-ex-1}, \eqref{eqn:lti-ex-2} and the shared communication network are denoted $\mathtt{tgaT}$, $\mathtt{tgaH}$ and $\mathtt{net}$, respectively.
The specification is the same as in Section \ref{sec:case-1} and again we generate a strategy using UPPAAL-Tiga.
The following is a fragment of the strategy generated by UPPAAL-Tiga when the location of $\mathtt{tgaT}$, $\mathtt{tgaH}$ and $\mathtt{net}$ is $\mathcal R_{37}$, $\mathcal R_{38}^{\sigma_1}$ and $\mathit{Idle}$, respectively.
\begin{verbatim}
State: ( tgaT.R37 tgaH.R38a1 net.Idle )
When you are in (tgaT.c==0 && 145<tgaH.c && tgaH.c<=154),
take transition tgaT.R37->tgaT.R37a1 { c == 0, tau, 1 }
When you are in (tgaT.c==0 && 65<=tgaH.c && tgaH.c<=102),
take transition tgaT.R37->tgaT.R37a2 { c == 0, tau, 1 }
When you are in (tgaT.c==0 && 102<tgaH.c && tgaH.c<=145) 
  || (tgaT.c==0 && 5<=tgaH.c && tgaH.c<65),
take transition tgaT.R37->tgaT.R37a3 { c == 0, tau, 1 }
\end{verbatim}
Notice that now there are three conditions:
if the $i$-th condition is satisfied, the location of $\mathtt{tgaT}$ is forced to transit to $\mathcal R_{37}^{\sigma_i}$ for $i \in \{1,2,3\}$.

The strategy generated by UPPAAL-Tiga is applied to the NCSs \eqref{eqn:lti-ex-1}-\eqref{eqn:lti-ex-2}, both with the state initialized at $[1,100]^T$, corresponding in both cases with the initial location is $\mathcal R_1$.
The network status is shown in Fig.\ \ref{fig:sim-Tab-Het-etc-trig} (bottom).
Short, medium and long bars represent event-triggered with triggering coefficient equals $\sigma_1$, $\sigma_2$ and $\sigma_3$, respectively.
During the first 10 time units, the input of \eqref{eqn:lti-ex-1} is updated 84 times consisting of 56 updates using $\sigma_1$, 6 updates using $\sigma_2$ and 22 updates using $\sigma_3$.
For \eqref{eqn:lti-ex-2}, the input is updated 182 times consisting of 106 updates using $\sigma_1$, 26 updates using $\sigma_2$ and 50 updates using $\sigma_3$.
The state and input trajectories are shown in Fig.\ \ref{fig:sim-Tab-Het-etc-const}.

\begin{figure}[!t]
\centering
\includegraphics{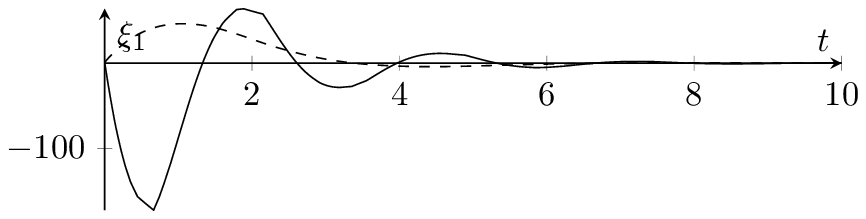}\\
\includegraphics{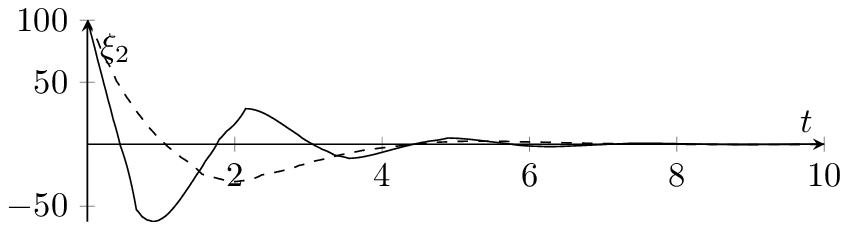}\\[1ex]
\includegraphics{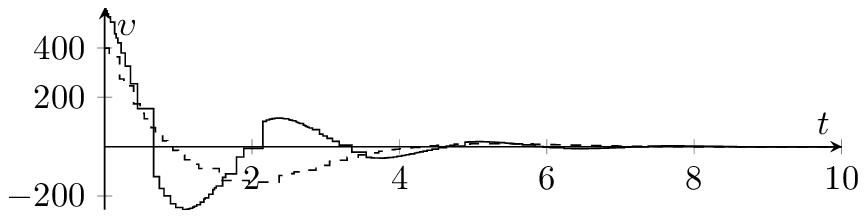}
\caption{The state and input trajectories for the experiment in Section \ref{sec:case-2}. The solid and dashed lines are the trajectories of \eqref{eqn:lti-ex-2} and \eqref{eqn:lti-ex-1}, respectively.}\label{fig:sim-Tab-Het-etc-const}
\end{figure}

%

\section{Discussion and Future Work}
\label{sec:concl}

We have provided an approach to synthesize conflict-free scheduling policies for sets of networked control systems (NCSs) with the possibilities of updating the input value according to an event-triggering mechanism, selectable from a set of them, or earlier than the time dictated by such an event-triggering rule.
As indicated in Section \ref{sec:scheduling-limit} the main limitations of this proposed scheduling scheme is its centralized nature, and the fact that the scheduler needs to be able to read the content of the messages sent through the network. The approach is nonetheless applicable to many setups encountered in practice in which different control systems are interconnected through a bus, e.g.\ CAN, EtherCAT or FlexRay. In such systems, every element connected to the bus can see the traffic flowing through the network. While it may be the case that the precise content of messages is not available (e.g.\ for potential security reasons), it is worth noting that for the scheduler only the abstracted state, i.e. the region $\mathcal{R}_s$, is relevant. Therefore we can envisage implementations or practical applications in which this sort of scheduling could be easily adopted.

In wireless settings, the event-triggered paradigm offers great benefits for energy consumption reduction, but network topologies can be in general more complex than a simple bus type of configuration. Therefore, interesting extensions of this work to allow decentralized scheduling, possibly including network topological constraints, would enable broader applicability of these techniques.  Current and future work is focusing on these issues, extensions of the abstraction of the timing of event-triggered systems beyond LTI systems with state-feedback, and on the implementation of a tool-box automating the whole timing abstraction and scheduler synthesis proposed in \cite{ArmanTCoN} and the current paper respectively.

\bibliographystyle{plain}

\end{document}